\newtheorem{theorem}{Theorem}
\newtheorem{definition}{Definition}
\newtheorem{example}{Example}
\title{High-Rate, Distributed Training-Embedded Complex Orthogonal Designs for Relay Networks}
\begin{document}
\author{
\authorblockN{J. Harshan}
\authorblockA{Dept of ECE,\\ Indian Institute of Science \\
Bangalore 560012, India\\
{\em Email:} \url{harshan@ece.iisc.ernet.in}\\
}
\and
\authorblockN{B. Sundar Rajan}
\authorblockA{Dept of ECE, \\Indian Institute of Science \\
Bangalore 560012, India\\
\hspace{0.2cm} {\em Email:} \url{bsrajan@ece.iisc.ernet.in}\\
}
\and
\authorblockN{Are Hj{\o}rungnes}
\authorblockA{UNIK-University Graduate Center \\
University of Oslo, \\NO-2027, Kjeller, Norway\\
{\em Email:} \url{arehj@unik.no} \\
}
}
\maketitle
%
\begin{abstract}
Distributed Space-Time Block Codes (DSTBCs) from Complex Orthogonal Designs (CODs) (both square and non-square CODs other than the Alamouti design) are known to lose their single-symbol ML decodable (SSD) property when used in two-hop wireless relay networks using amplify and forward protocol. For such a network, in this paper, a new class of high rate, training-embedded (TE) SSD DSTBCs are constructed from TE-CODs. The proposed codes include the training symbols in the structure of the code which is shown to be the key point to obtain high rate as well as the SSD property. TE-CODs are shown to offer full-diversity for arbitrary complex constellations. Non-square TE-CODs are shown to provide higher rates (in symbols per channel use) compared to the known SSD DSTBCs for relay networks with number of relays less than $10.$ 
\end{abstract}
\begin{keywords}
Cooperative diversity, single-symbol ML decoding, distributed space-time coding, complex orthogonal designs.
\end{keywords}

\section{Introduction and Preliminaries}
\label{sec1}
\begin{LARGE}D\end{LARGE}istributed space-time coding has been a powerful technique for achieving spatial diversity in wireless networks with single antenna terminals. An excellent introduction to cooperative communications based on distributed space-time coding in two-hop wireless networks can be seen in \cite{LaW, JiH1} and the references within. The technique involves a two phase protocol where, in the first phase, the source broadcasts the information to the relays and in the second phase, the relays linearly process the signals received from the source and forward them to the destination such that the signal at the destination appears as a Space-Time Block Code (STBC). Such STBCs, generated distributively by the relay nodes, are called Distributed Space-Time Block Codes (DSTBCs).

\indent In a co-located Multiple Input Multiple Output (MIMO) channel, an STBC is said to be Single-Symbol Maximum Likelihood (ML) Decodable (SSD) if the ML decoding metric splits as a sum of several terms, with each term being a function of only one of the information symbols \cite{KhR}. Since the work of \cite{LaW, JiH1}, considerable efforts have been made to design SSD DSTBCs. A DSTBC is said to be SSD if the STBC seen by the destination from the set of relays is SSD.

\indent DSTBCs with single-symbol ML decodability was first introduced for cooperative networks in \cite{YiK}. Further, in \cite{HaR}, high-rate, SSD DSTBCs have been proposed wherein the source performs linear precoding of information symbols before transmitting it to all the relays. For the class of codes proposed in \cite{YiK} and \cite{HaR}, the channel model is such that each relay is assumed to know only the statistics of the channel from the source to itself (but not their realizations). In \cite{YiK2} and \cite{SCR}, SSD DSTBCs are proposed for the case where every relay node is assumed to have the perfect knowledge of the phase component of the channel from the source to the relay. An upper bound on the symbol rate for such a set up is shown to be $\frac{1}{2}$ (in complex symbols per channel use in the second phase) which is independent of the number of relays. However, these codes have exponential decoding delay whereas the codes in \cite{YiK} and \cite{HaR} are of minimal delay. Moreover, in the model considered in \cite{YiK2} and \cite{SCR}, training sequences have to be transmitted from the source to the relays since each relay needs to know the phase component of the channel from the source to itself. Therefore, the source needs to use some of the resources such as power and bandwidth for transmitting the training sequences. In \cite{YiK2} and \cite{SCR}, the number of channel uses spent on transmitting training signals are not accounted in computing the rate of the DSTBCs.

\indent For point to point co-located MIMO channels, complex orthogonal designs (CODs) \cite{TJC, Lia}, coordinate interleaved orthogonal designs (CIODs) \cite{KhR} and Clifford unitary weight designs (CUWDs) \cite{KaR} are well known for their SSD property when used to generate STBCs. Note that, with the assumption of the knowledge of the phase component of the source-relay channel at the relays, all CODs can be constructed as DSTBCs \cite{JiJ}. The extensions of CODs such as CIODs and CUWDs can also be distributively constructed. However, CODs (other than the Alamouti design), CIODs and CUWDs (other than that for 4 antennas) do not retain the SSD property. 
 
\indent In this paper, we propose high rate, training embedded SSD DSTBCs. The proposed codes include the training symbols in the structure of the code which is shown to be the key point to obtain high rate as well as the SSD property. On the similar lines of the work in \cite{YiK2}, \cite{SCR}, the relay nodes are assumed to have the knowledge of the phase component of the channel from the source to itself. In this paper, the number of channel uses spent on transmitting training signals from the source to the relays are accounted in computing the rate of the proposed DSTBCs. The main contributions of this paper and the organization can be summarized as follows:
\begin{itemize}
\item We propose a novel method to construct high rate (in symbols per channel use), SSD DSTBCs for two-hop wireless relay networks based on the amplify and forward protocol. The proposed method has an in-built training scheme for the relays to learn the phase components of their backward channels. The in-built training symbols is shown to be the key point to obtain high rate as well as the SSD property (Section \ref{sec2}).
\item When all the zero entries of a COD (square or non-square) are replaced by a constant, the resulting design is called a Training-Embedded-COD (TE-COD). These are shown to generate SSD DSTBCs. This essentially enables all CODs to be usable as SSD DSTBCs with full-diversity for arbitrary complex constellations. Compared to the existing SSD codes of \cite{SCR} (where the number of channel uses spent in sending the training symbols are not included in calculating the rate of the DSTBCs), the class of non-square TE-CODs are shown to provide higher rates for two-hop networks with number of relays less than $10$ (Section \ref{sec3}).  We highlight that the class of non-square TE-CODs provide higher rates than those in \cite{SCR} even though the number of channel uses spent in sending the training symbols are not included in calculating the rate of the schemes in \cite{SCR}.
\item Simulation results for $4$ relays are presented which show that the proposed scheme performs better than the code presented in \cite{SCR} by 0.5 db (Section \ref{sec4}).

\end{itemize}
\indent \textit{Notations:} Throughout the paper, lower case boldface letters and capital boldface letters are used to represent vectors and matrices respectively. For a complex matrix $\textbf{X}$, the matrices $\textbf{X}^*$, $\textbf{X}^T$,  $\textbf{X}^{H}$, $|\textbf{X}|$, $\mbox{Re}~\textbf{X}$ and $\mbox{Im}~\textbf{X}$ denote, respectively, the conjugate, transpose, conjugate transpose, determinant, real part and imaginary part of $\textbf{X}$. The element in the $r_1$-th row and the $r_2$-th column of the matrix $\textbf{X}$ is denoted by $[\textbf{X}]_{r_1,r_2}$. The $ T\times T$ identity matrix and the $T \times T$ zero matrix are respectively denoted by $\textbf{I}_T$ and $\textbf{0}_T$. The magnitude of a complex number $x$, is denoted by $|x|$ and $E \left[x\right]$ is used to denote the expectation of the random variable $x.$ A circularly symmetric complex Gaussian random vector $\textbf{x},$ with mean $\boldsymbol{\mu}$ and covariance matrix $\mathbf{\Gamma}$ is denoted by $\textbf{x} \sim \mathcal{CSCG} \left(\boldsymbol{\mu}, \mathbf{\Gamma} \right) $. The set of all integers, the real numbers and the complex numbers are respectively, denoted by ${\mathbb Z}$, $\mathbb{R}$ and ${\mathbb C}$ and  $\bf{i}$ is used to represent $\sqrt{-1}.$


\section{Training- Embedded Precoded Distributed Space-Time Coding}
\label{sec2}
\subsection{Signal Model}
The wireless network considered as shown in Fig. \ref{model_network} consists of $K + 2$ nodes, each having a single antenna. There is one source node and one destination node. All the other $K$ nodes are relays. We denote the channel from the source node to the $\lambda$-th relay as $h_{\lambda}$ and the channel from the $\lambda$-th relay to the destination node as $g_{\lambda}$ for $\lambda=1,2, \cdots, K$.
\noindent The following assumptions are made in our model:
\begin{itemize}
\item All the nodes are half duplex constrained.
\item Fading coefficients  $h_{\lambda}$ and $g_{\lambda}$ are i.i.d $ \mathcal{CSCG} \left(0,1 \right)$ with a coherence time interval of at least $N$ and $T$ channel uses respectively, where $N$ and $T$ are the number of channel uses in the first phase and the second phase, respectively.
\item All the nodes are synchronized at the symbol level.
\item Relay nodes have the knowledge of only the phase components of the fade coefficients $h_{\lambda}$.
\item Destination knows all the fade coefficients $g_{\lambda}$, $h_{\lambda}$ for $\lambda = 1, 2, \cdots, K$.
\end{itemize}
\begin{figure}[h]
\centering
\includegraphics[width=3in]{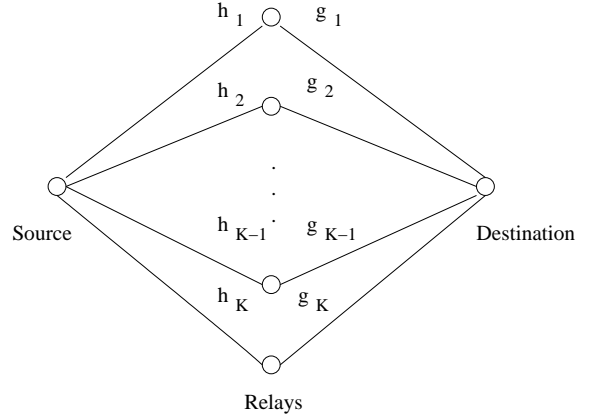}
\caption{Wireless relay network model.}
\label{model_network}
\end{figure}
\begin{figure*} 
\begin{equation} 
\label{concatenation}
\bar{\textbf{r}}_{\lambda} =
e^{-\textbf{i}\angle{h_\lambda}}  \left[e^{\textbf{i}2 (\angle{\alpha}+\angle{h_\lambda})}\textbf{r}_{\lambda}^{*}(1) ~e^{\textbf{i}2 (\angle{\alpha}+\angle{h_\lambda})}\textbf{r}_{\lambda}^{*}(2) ~\cdots ~e^{\textbf{i}2 (\angle{\alpha}+\angle{h_\lambda})}\textbf{r}^{*}_{\lambda}(\lfloor \frac{T-k}{2} \rfloor) ~\textbf{r}^T_{\lambda} \right]^{T} \in \mathbb{C}^{T \times 1}
\end{equation}
\hrule
\end{figure*}
\begin{figure*}
\begin{equation}
\label{code_word}
\textbf{X}  = \left[
\textbf{A}_{1}\bar{\textbf{x}} + \textbf{B}_{1}\bar{\textbf{x}}^{*}  ~~
\textbf{A}_{2}\bar{\textbf{x}} + \textbf{B}_{2}\bar{\textbf{x}}^{*}  ~~
\cdots ~~
\textbf{A}_{K}\bar{\textbf{x}} + \textbf{B}_{K}\bar{\textbf{x}}^{*}
\right] \in \mathbb{C}^{T \times K}
\end{equation}
\end{figure*}
\begin{figure*}
\begin{equation}
\label{covariance_matrix}
\textbf{R} = {\frac{P_{2}T}{P_{r}}}\left[ \sum_{\lambda=1}^{K} |g_{\lambda}|^{2}\left\lbrace \textbf{A}_{\lambda}\textbf{A}_{\lambda}^{H} + \textbf{B}_{\lambda}\textbf{B}_{\lambda}^{H}\right\rbrace \right]  + \textbf{I}_{T} \in \mathbb{C}^{T \times T}\\
\end{equation}
\end{figure*}
\begin{figure*}
\begin{equation}
\label{ML}
\hat{\textbf{x}} = \mbox{arg} \, \min_{\textbf{x} \in \mathcal{S}} \left[ -2 \mbox{Re} \left(\sqrt{\frac{P_{1}P_{2}NT}{P_{r}}}\textbf{g}^{H}\textbf{X}^{H}\textbf{R}^{-1}\textbf{y}\right) + {\frac{P_{1}P_{2}NT}{P_{r}}}\textbf{g}^{H}\textbf{X}^{H}\textbf{R}^{-1}\textbf{X}\textbf{g} \right] \in \mathbb{C}^{k \times 1}
\end{equation}
\hrule
\end{figure*}

\noindent The source is equipped with a codebook $\mathcal{S}$ = $\left\{ \textbf{x}_{1},\, \textbf{x}_{2},\, \textbf{x}_{3},\, \cdots, \textbf{x}_{L} \right\} $ consisting of information vectors $\textbf{x}_{l} \in \mathbb{C}^{N \times 1}$ such that $E\left[\textbf{x}_{l}^{H}\textbf{x}_{l}\right]$ = $1$. The information vectors are of the form,
\begin{equation*}
\textbf{x} = [ \underbrace{\alpha ~\alpha ~\cdots ~\alpha}_{\lceil \frac{T-k}{2} \rceil ~times} ~x_{1} ~x_{2} \cdots ~x_{k}]^T \in \mathbb{C}^{N \times 1} 
\end{equation*}
where the complex variables $x_{1}, x_{2} \cdots x_{k}$ take values from a complex signal set denoted by $\mathcal{M},$  $\alpha \in \mathbb{C}$ is a non-zero complex constant chosen as the training symbol and $N= \lceil \frac{T-k}{2} \rceil+k.$  The value of $\alpha$ is chosen such that the condition $E\left[\textbf{x}_{l}^{H}\textbf{x}_{l}\right]$ = 1 is satisfied. The value of $\alpha$ is assumed to be known to all the relays and the destination. In the first phase, the source broadcasts the vector $\textbf{x}$ to all the $K$ relays (but not to the destination which is assumed to be located far from the source). 

The received vector at the $\lambda$-th relay is given by $\textbf{r}_{\lambda} = \sqrt{P_{1}N}h_{\lambda}\textbf{x} + \textbf{n}_{\lambda} \in \mathbb{C}^{N \times 1}$, for all $\lambda = 1,2,\cdots, K$ where $\textbf{n}_{\lambda} \sim \mathcal{CSCG} \left(\textbf{0}_{N \times 1},\textbf{I}_{N} \right) $ is the additive noise at the $\lambda$-th relay and $P_{1}$ is the total power used at the source node for every channel use. Using the $N = \lceil \frac{T-k}{2} \rceil + k$ length vector, $\textbf{r}_{\lambda}$, the $\lambda$-th relay constructs the $T-$length new vector $\bar{\textbf{r}}_{\lambda}$ given by \eqref{concatenation} shown at the top of this page, where $\textbf{r}_{\lambda}(i)$ denotes the $i$-th component of the vector $\textbf{r}_{\lambda}$. The $\lambda$-th relay is assumed to obtain a perfect estimate of the phase component of $h_{\lambda}$ using the training symbols sent during the first $\lceil \frac{T-k}{2} \rceil$ channel uses in the first phase. This has enabled the phase compensation in \eqref{concatenation} which can also be  given by 
\begin{equation*}
\bar{\textbf{r}}_{\lambda} = \sqrt{P_{1}N}|h_{\lambda}|\bar{\textbf{x}} + \bar{\textbf{n}}_{\lambda} \in \mathbb{C}^{T\times 1}
\end{equation*}
where 
\begin{equation}
\label{concatenation1}
\bar{\textbf{x}} = [ \underbrace{\alpha ~\alpha ~\cdots ~\alpha}_{T-k ~\mbox{times}} ~{x}_{1}~{x}_{2} \cdots {x}_{k}]^{T} \in \mathbb{C}^{T \times 1}.
\end{equation}

Note that the concatenating operation in \eqref{concatenation} continues to keep the components of $\bar{\textbf{n}}_{\lambda}$ identically distributed and uncorrelated to each other. \\
\indent In the second phase, all the relay nodes are scheduled to transmit $T$ length vectors to the destination simultaneously. Each relay is equipped with a fixed pair of matrices $\textbf{A}_{\lambda}$, $\textbf{B}_{\lambda} \in \mathbb{C}^{T \times T}$ and is allowed to linearly process the vector $\bar{\textbf{r}}_{\lambda}$. The $\lambda$-th relay is scheduled to transmit
\begin{equation}
\label{amlify_technique}
\textbf{t}_{\lambda} = \sqrt{\frac{P_{2}T}{P_{r}}}\left\lbrace \textbf{A}_{\lambda}\bar{\textbf{r}}_{\lambda} + \textbf{B}_{\lambda}\bar{\textbf{r}}_{\lambda}^{*}\right\rbrace \in \mathbb{C}^{T \times 1}
\end{equation}
where $P_{2}$ is the total power used at each relay for every channel use in the second phase and $P_{r}$ is the average norm of the vector $\bar{\textbf{r}}_{\lambda}$. The vector received at the destination is given by
\begin{equation*}
\textbf{y} = \sum_{\lambda = 1}^{K} g_{\lambda}\textbf{t}_{\lambda} + \textbf{w} \in \mathbb{C}^{T \times 1}
\end{equation*}
\noindent where $\textbf{w} \sim \mathcal{CSCG} \left(\boldsymbol{0}_{T \times 1},\textbf{I}_{T} \right)$ is the additive noise at the destination. Substituting for $\textbf{t}_{\lambda}$, $\textbf{y}$ can be written as
\begin{equation*}
\textbf{y} = \sqrt{\frac{P_{1}P_{2}NT}{P_{r}}}\textbf{X}\textbf{g}  + \textbf{n}
 \in \mathbb{C}^{T \times 1}
\end{equation*}
\noindent where
\begin{itemize}
\item $\textbf{n} = \sqrt{\frac{P_{2}T}{P_{r}}}\left[ \sum_{\lambda=1}^{K} g_{\lambda}\left\lbrace \textbf{A}_{\lambda} \bar{\textbf{n}}_{\lambda} + \textbf{B}_{\lambda} \bar{\textbf{n}}_{\lambda}^{*}\right\rbrace \right]  + \textbf{w} \in \mathbb{C}^{T \times 1}   \label{bfN}.$
\item The equivalent channel \textbf{g} is given by $[|h_{1}|g_{1} ~ |h_{2}|g_{2} ~ \cdots ~ |h_{K}|g_{K} ]^{T} \in \mathbb{C}^{K \times 1}.$
\item Every codeword $\textbf{X} \in \mathbb{C}^{T \times K}$ which is of the form \eqref{code_word} (shown at the top of this page) is a function of the information vector $\textbf{x}$ through $\bar{\textbf{x}}$.
\end{itemize}

The covariance matrix $\textbf{R} \in \mathbb{C}^{T \times T}$ of the noise vector $\textbf{n}$ is given in \eqref{covariance_matrix} (top of this page). Note that $\textbf{R}$ depends on the choice of the relay matrices $\textbf{A}_{\lambda}$ and $\textbf{B}_{\lambda}.$ The relay matrices needs to be chosen such that the resulting code seen by the destination is SSD. 

The Maximum Likelihood (ML) decoder for $\textbf{x}$ is given by $\hat{\textbf{x}}$ shown in \eqref{ML} at the top of this page.
\begin{definition}
\label{def_pdstbc}
\noindent The collection $\mathcal{C}$ of $T \times K$ codeword matrices given by \eqref{code_word},
\begin{equation}
\label{dstbc}
\mathcal{C} = \left\{ \textbf{X} \mid \forall ~\textbf{x} \in \mathcal{S} \right\}
\end{equation}

\noindent is called a Training-Embedded Distributed Space-Time Block Code (TE-DSTBC) which is determined by the sets $\left\lbrace  \textbf{A}_{\lambda}, \textbf{B}_{\lambda}\right\rbrace$ and $\mathcal{S}$.
\end{definition}

\indent Note that unlike the existing DSTBCs, TE-DSTBCs contain the training symbols in the code structure along with the information symbols justifying their name. In the following section, we show that this training-embedding enables construction of SSD TE-DSTBCs.

\section{TE-DSTBC from TE-CODs}
\label{sec3}

\indent In this section, we construct two classes of TE-DSTBC (square and non-square TE-DSTBC) that are single-symbol ML decodable at the destination. The proposed designs are derived from the well known class of complex orthogonal designs (CODs) \cite{TJC}. The proposed class of complex designs are introduced in the following definition.
\begin{definition}
\label{def_alpha_cod}
Let the $T \times K$ matrix $\textbf{X}$ represent a COD in $k$ complex variables. If the zeros in the design $\textbf{X}$ are replaced by a non-zero constant say $\alpha \in \mathbb{C}$, then we refer $\textbf{X}$ as a TE-COD. 
\end{definition}

\indent Note that the above definition holds both for  the classes of  square CODs (when $T = K$) as well as non-square CODs (when $T > K$).

\begin{example}
\label{example_1}
For the well known $4 \times 4$ COD  \cite{TJC} of rate $\frac{3}{4}$,
with $x_{1}$, $x_{2}$ and $x_{3}$ being the complex variables, the corresponding TE-COD is given by,
\begin{equation}
\textbf{X}_{\mbox{TE-COD}} = \left[\begin{array}{rrrr}
x_{3} & \alpha & x_{2} & x_{1}\\
\alpha & x_{3} & x_{1}^{*} & -x_{2}^{*}\\
x_{2}^{*} & x_{1} & -x_{3}^{*} & \alpha\\
x_{1}^{*} & -x_{2} & \alpha & -x_{3}^{*}\\
\end{array}\right].\\
\end{equation}
\end{example}
In general, given a $T \times K$ TE-COD, $\textbf{X}_{\mbox{TE-COD}}$ in $k$ variables, every column of $\textbf{X}$ contains exactly $k$ distinct variables and $T - k$ copies of $\alpha$. Since $\textbf{X}$ is a linear design \cite{HaH} in the constant $\alpha$ and the variables $x_{i}$'s, the design $\textbf{X}$ can also be written as 

{\small
\begin{equation}
\label{alpha_cod}
\textbf{X}_{\mbox{TE-COD}}  = \left[ \textbf{C}_{1}\bar{\textbf{x}} + \textbf{D}_{1}\bar{\textbf{x}}^{*} ~~  \cdots ~~ \textbf{C}_{K}\bar{\textbf{x}} + \textbf{D}_{K}\bar{\textbf{x}}^{*} \right] \in \mathbb{C}^{T \times K}
\end{equation}
}
\noindent where 
\begin{equation}
\label{x_vector}
\bar{\textbf{x}} = [ \underbrace{\alpha ~\alpha ~\cdots ~\alpha}_{T-k ~\mbox{times}} ~x_{1}~x_{2} \cdots x_{k}]^{T} \in \mathbb{C}^{T \times 1}
\end{equation}
and $\textbf{C}_{\lambda}, \textbf{D}_{\lambda} \in \mathbb{C}^{T \times T},$  $\lambda =1,2,\cdots,K,$ are the column-vector representation matrices of $\textbf{X}_{\mbox{TE-COD}}$ \cite{Lia}. The number of $\alpha$'s in the vector $\bar{\textbf{x}}$ is equal to the number of $\alpha$'s in every column of TE-COD. The following theorem provides an important relation satisfied by the matrices $\textbf{C}_{\lambda}, \textbf{D}_{\lambda}$ of TE-CODs.
\begin{theorem}
\label{thm1}
The column-vector representation matrices $\textbf{C}_{\lambda}, \textbf{D}_{\lambda}$ of a TE-COD, $\textbf{X}_{\mbox{TE-COD}}$ (as represented in \eqref{alpha_cod}) can be chosen to satisfy the following relation,
\begin{equation}
\label{relay_unitary_matrix}
\textbf{C}_{\lambda}\textbf{C}^{H}_{\lambda} + \textbf{D}_{\lambda}\textbf{D}^{H}_{\lambda} = \textbf{I}_{T} ~\forall ~\lambda  = 1 \mbox{ to }K.\\
\end{equation}
\end{theorem}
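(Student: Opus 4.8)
The plan is to prove the statement constructively, one column index $\lambda$ at a time, by exploiting the non-uniqueness of the representation \eqref{alpha_cod}. The starting observation is that \eqref{relay_unitary_matrix} is nothing but the requirement that the rows of the $T \times 2T$ matrix $\left[ \textbf{C}_{\lambda} ~ \textbf{D}_{\lambda}\right]$ be orthonormal, since $\left[\textbf{C}_\lambda \textbf{C}_\lambda^H + \textbf{D}_\lambda \textbf{D}_\lambda^H\right]_{r_1,r_2} = \sum_j [\textbf{C}_\lambda]_{r_1,j}[\textbf{C}_\lambda]_{r_2,j}^* + \sum_j [\textbf{D}_\lambda]_{r_1,j}[\textbf{D}_\lambda]_{r_2,j}^*$ is exactly the Hermitian inner product of rows $r_1$ and $r_2$. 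I would therefore reduce the theorem to arranging these rows to be orthonormal, noting that the condition involves column $\lambda$ alone and so can be handled independently for each $\lambda$.

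First I would recall the defining property of a COD, namely $\textbf{X}^{H}\textbf{X} = \left(\sum_{i=1}^{k}|x_i|^2\right)\textbf{I}_K$. Its $(\lambda,\lambda)$ diagonal entry equals the squared norm of column $\lambda$, which forces each of the $k$ variables to appear exactly once (as $\pm x_i$ or $\pm x_i^*$, or more generally a unit-modulus multiple) in every column, the remaining $T-k$ entries being zero. Consequently, in the TE-COD column $\lambda$ splits into $k$ \emph{variable rows} and $T-k$ \emph{constant rows}, the latter being the positions where a zero has been replaced by $\alpha$. I would then define $\textbf{C}_\lambda,\textbf{D}_\lambda$ row by row: a variable row carrying $\epsilon x_i$ receives $\epsilon$ in the $(T-k+i)$-th column of $\textbf{C}_\lambda$, a row carrying $\epsilon x_i^*$ receives $\epsilon$ in the $(T-k+i)$-th column of $\textbf{D}_\lambda$, and all other entries of that row in both matrices are zero. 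Since the variables occupy the last $k$ positions of $\bar{\textbf{x}}$ in \eqref{x_vector}, these rows reproduce the variable entries.

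The crux is the $T-k$ constant rows, and this is where ``can be chosen'' acquires content: because $\alpha$ occupies the first $T-k$ positions of $\bar{\textbf{x}}$, each constant entry $\alpha$ may be generated from \emph{any} of those positions, so the representation is not forced. I would exploit this freedom by enumerating the constant rows as $r_1,\ldots,r_{T-k}$ and setting $[\textbf{C}_\lambda]_{r_m,m}=1$ with the corresponding $\textbf{D}_\lambda$ entries zero, that is, assigning the constant rows \emph{bijectively} to the $T-k$ constant positions of $\bar{\textbf{x}}$. Each such row then outputs $1\cdot\alpha=\alpha$, as required. The main obstacle, conceptually, is exactly to see that a naive choice, such as routing every $\alpha$ through the same position of $\bar{\textbf{x}}$, would make all constant rows parallel and violate \eqref{relay_unitary_matrix}; the bijective assignment is precisely what repairs this, and it is available because the number of constant rows equals the number of $\alpha$-positions.

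Finally I would verify orthonormality directly. By construction every row of $\left[\textbf{C}_\lambda ~ \textbf{D}_\lambda\right]$ carries a single nonzero entry of unit magnitude, so all diagonal entries in \eqref{relay_unitary_matrix} equal $1$. For distinct rows, two variable rows occupy distinct variable columns (each variable appears only once per column of the COD), two constant rows occupy distinct constant columns (by the bijection), and a variable row and a constant row have their nonzeros in disjoint column-blocks of $\bar{\textbf{x}}$; in every case the inner product vanishes. Hence $\textbf{C}_\lambda \textbf{C}_\lambda^H + \textbf{D}_\lambda \textbf{D}_\lambda^H = \textbf{I}_T$, and since $\lambda$ was arbitrary the claim follows. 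I expect the only delicate part to be bookkeeping the sign and conjugation cases, which is routine since each nonzero still has magnitude one.
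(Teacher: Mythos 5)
Your proposal is correct and follows essentially the same route as the paper's proof: both rest on the observation that each row of $\left[\textbf{C}_{\lambda} ~ \textbf{D}_{\lambda}\right]$ carries a single $\pm 1$ entry and that the $T-k$ rows generating the constant $\alpha$ can be routed bijectively to the $T-k$ $\alpha$-positions of $\bar{\textbf{x}}$, forcing the non-zero entries into distinct columns and hence giving \eqref{relay_unitary_matrix}. Your version merely makes explicit what the paper leaves implicit, namely that the identity is row-orthonormality of $\left[\textbf{C}_{\lambda} ~ \textbf{D}_{\lambda}\right]$ and that the bijective assignment is exactly the content of ``can be chosen.''
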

\begin{proof} 
Consider the column vector representation of TE-CODs as given in \eqref{alpha_cod}. Since the entries of  $\textbf{X}_{\mbox{TE-COD}}$ are of the form $\alpha$, $\pm x_{i}$ and $\pm x^{*}_{i} ~\forall~i = 1$ to $k$ and the vector $\bar{\textbf{x}}$ is given by \eqref{x_vector}, 
it is straightforward to verify that the matrices $\textbf{C}_{i_1}, \textbf{D}_{i_2},$ $i_1,i_2=1,2,\cdots, K,$ satisfy the following three properties:
\begin{itemize}
\item The entries of the matrices $\textbf{C}_{i_1}, \textbf{D}_{i_2}$ are $0, \pm 1$.
\item The matrices $\textbf{C}_{i_1}, \textbf{D}_{i_2}$ can have at most one non-zero entry in every row.
\item The two matrices $\textbf{C}_{i_1}$ and $\textbf{D}_{i_1}$ do not contain non-zero entries in the same row.
\end{itemize}
Note that every complex variable appears exactly once (either as $\pm x_{i}$ or $\pm x^{*}_{i}$) in every column of the design. Without loss of generality, let us assume that $l_{\lambda}$ out of the $k$ complex variables which appear in the $\lambda$-th column of the design, $\lambda=1,2,\cdots,K,$ are of the form $\pm x_{i}$. Then, the matrix $\textbf{C}_{\lambda}$ must have $T-k+l_{\lambda}$ non-zero rows (where $l_{\lambda}$ non-zero rows are for the variables and the remaining non-zero rows are for the $\alpha$'s). Further, as the remaining $k-l_{\lambda}$ variables appear as conjugates (i.e., of the form $\pm x^{*}_{i}$), the matrix $\textbf{D}_{\lambda}$ must have $k-l_{\lambda}$ non-zero rows.

Since there are $T-k$ entries that are $\alpha$ in the vector $\bar{\textbf{x}}$, the non-zero entries in the $T-k$ non-zero rows, which are alloted for the $T-k$ copies of $\alpha$ of $\textbf{C}_{\lambda}$ can be chosen to appear in different columns. Therefore, the columns of $\textbf{C}_{\lambda}$ and $\textbf{D}_{\lambda}$ will have exactly one non-zero entry and hence they satisfy the relations given by \eqref{relay_unitary_matrix}.
\end{proof}
\subsection{Distributed Construction of TE-CODs}
\label{sec3_subsec1}
With reference to the distributed space-time coding technique proposed in Section \ref{sec2}, in this section, we describe how to choose the sets $\left\lbrace  \textbf{A}_{\lambda}, \textbf{B}_{\lambda} ~|~\lambda = 1 \mbox{ to } K \right\rbrace$ and $\mathcal{S}$ such that a $T \times K$ TE-COD, $\textbf{X}_{\mbox{TE-COD}}$ in $k$ variables can be constructed as the TE-DSTBC given in \eqref{dstbc}. Note that every column of $\textbf{X}_{\mbox{TE-COD}}$ contains exactly $k$ distinct variables and $T - k$ copies of $\alpha$. 

After each relay performs the concatenation operation specified in \eqref{concatenation}, the vector $\bar{\textbf{r}}_{\lambda}$ is given by $\bar{\textbf{r}}_{\lambda} = \sqrt{P_{1}N}|h_{\lambda}|\bar{\textbf{x}} + \bar{\textbf{n}}_{\lambda}$ where the vector $\bar{\textbf{x}}$ is given by \eqref{concatenation1}. Hence, the column vector representation matrices $\textbf{C}_{\lambda}$ and $\textbf{D}_{\lambda}$ of $\textbf{X}_{\mbox{TE-COD}}$ are the same as the relay matrices, $\textbf{A}_{\lambda}$ and  $\textbf{B}_{\lambda}$ respectively. With the above choice on the sets $\left\lbrace \textbf{A}_{\lambda}, \textbf{B}_{\lambda} ~|~\lambda = 1 \mbox{ to } K \right\rbrace$ and $\mathcal{S}$, a $T \times K$ TE-COD, $\textbf{X}_{\mbox{TE-COD}}$ in $k$ variables can be constructed as a TE-DSTBC.
\begin{example}
\label{example_2}
To construct the TE-COD given in Example \ref{example_1}, the following ingredients are required at the various terminals. We have $T=K=4$ and $k=3.$ The set $\mathcal{S}$ is given by $\mathcal{S} = \{ [\alpha ~x_{1} ~x_{2} ~x_{3}]^T ~|~ \forall ~ x_{i} \in \mathcal{M} \}.$ The corresponding relay matrices $\textbf{A}_{\lambda}, \textbf{B}_{\lambda}$ are given by

{\footnotesize
\begin{equation*}
\textbf{A}_{1} = \left[\begin{array}{rrrr}
0 & 0 & 0 & 1\\
1  & 0 & 0 & 0\\
0 & 0 & 0 & 0\\
0 & 0 & 0 & 0\\
\end{array}\right];~ 
\textbf{B}_{1} = \left[\begin{array}{rrrr}
0 & 0 & 0 & 0\\
0  & 0 & 0 & 0\\
0 & 0 & 1 & 0\\
0 & 1 & 0 & 0\\
\end{array}\right];~
\end{equation*}
\begin{equation*}
\textbf{A}_{2} = \left[\begin{array}{rrrr}
1 & 0 & 0 & 0\\
0  & 0 & 0 & 1\\
0 & 1 & 0 & 0\\
0 & 0 & -1 & 0\\
\end{array}\right];~ 
\textbf{B}_{2} = \textbf{0}_{4};
\end{equation*}
\begin{equation*}
\textbf{A}_{3} = \left[\begin{array}{rrrr}
0 & 0 & 1 & 0\\
0  & 0 & 0 & 0\\
0 & 0 & 0 & 0\\
1 & 0 & 0 & 0\\
\end{array}\right] ;~ \textbf{B}_{3} = \left[\begin{array}{rrrr}
0 & 0 & 0 & 0\\
0  & 1 & 0 & 0\\
0 & 0 & 0 & -1\\
0 & 0 & 0 & 0\\
\end{array}\right];~
\end{equation*}
\begin{equation*}
\textbf{A}_{4} = \left[\begin{array}{rrrr}
0 & 1 & 0 & 0\\
0  & 0 & 0 & 0\\
1 & 0 & 0 & 0\\
0 & 0 & 0 & 0\\
\end{array}\right];~\textbf{B}_{4} = \left[\begin{array}{rrrr}
0 & 0 & 0 & 0\\
0  & 0 & -1 & 0\\
0 & 0 & 0 & 0\\
0 & 0 & 0 & -1\\
\end{array}\right].
\end{equation*}
}
To implement the above design, the number of channel uses required in the first phase is $4$ (3 channel uses for the variables and the rest for transmitting $\alpha$). The number of channel uses in the second phase is also $4$. Hence, the rate of this scheme is $\frac{3}{8}$.
\end{example}
\subsection{On the Single-Symbol ML Decodable Property of Distributed TE-CODs}
\label{sec3_subsec2}
Note that excluding the scaling factors and the constant terms, the ML decoding metric given in \eqref{ML} is a function of the following two terms (i) $\textbf{g}^{H}\textbf{X}^{H}\textbf{R}^{-1}\textbf{y}$ and (ii) $\textbf{g}^{H}\textbf{X}^{H}\textbf{R}^{-1}\textbf{X}\textbf{g}$ where $\textbf{R}$ is a function of the set, $\left\lbrace \textbf{A}_{\lambda}, \textbf{B}_{\lambda}\right\rbrace$ as given in \eqref{covariance_matrix}. Also, from the results of Theorem \ref{thm1}, for the class of TE-CODs,  $\textbf{R}$ is a scaled identity matrix and hence the matrix $\textbf{X}^{H}\textbf{R}^{-1}\textbf{X}$ is the same as $\textbf{R}^{-1}\textbf{X}^{H}\textbf{X}$. Since $\textbf{X}$ is an TE-COD, the matrix $\textbf{X}^{H}\textbf{X}$ can be written as a sum of $k$ matrices where each matrix is strictly a function of only one of the real variables. For example, the matrix $\textbf{X}^{H}\textbf{X}$ for the square TE-COD for $4$ relays given in Example \ref{example_2}, is given by \eqref{XHX_COD_square_example} (at the top of the next page) in which, since $\textbf{X}^{H}\textbf{X}$ is a Hermitian matrix, we only present the elements on and above the main diagonal elements of $\textbf{X}^{H}\textbf{X}$.
\begin{figure*}
\begin{equation}
\label{XHX_COD_square_example}
\textbf{X}^{H}\textbf{X} = \left[\begin{array}{cccccccc}
|\alpha|^{2} + \sum_{i = 1}^{3} |x_{i}|^{2} & 2\mbox{Re}(x_{3}^{*}\alpha) & 2\mbox{Re}(x_{1}^{*}\alpha^{*}) & 2{\bf i}\mbox{Im}(x_{2}\alpha)\\
* & |\alpha|^{2} + \sum_{i = 1}^{3} |x_{i}|^{2} & 2{\bf i}\mbox{Im}(x_{2}\alpha^{*}) & 2\mbox{Re}(x_{1}\alpha^{*})\\
* & * & |\alpha|^{2} + \sum_{i = 1}^{3} |x_{i}|^{2} & -2\mbox{Re}(x_{3}\alpha)\\* & * & * & |\alpha|^{2} + \sum_{i = 1}^{3} |x_{i}|^{2}\\
\end{array}\right].
\end{equation}
\hrule
\end{figure*}
Note that, $\textbf{X}^{H}\textbf{X}$ is not diagonal since all the $0$'s have been replaced by $\alpha$. Hence, the ML decoding metric splits as a sum of several terms, with each term being a function of only one of the variables. Thus, when TE-CODs are applied as TE-DSTBC, every variable can be decoded independent of the other complex variables. Notice that when $\alpha=0,$ the matrix $\textbf{X}^{H}\textbf{X}$ in \eqref{XHX_COD_square_example} becomes a scaled identity matrix corresponding to the well known CODs.
\subsection{Full Diversity of Distributed TE-CODs}
\label{sec3_subsec3}
From the results of \cite{JiH1}, a TE-DSTBC is fully diverse if for any two distinct codewords $\textbf{X}_{1}$ and $\textbf{X}_{2}$ of a TE-DSTBC, the matrix $(\textbf{X}_{1} - \textbf{X}_{2})^{H} (\textbf{X}_{1} - \textbf{X}_{2})$ is full rank. Since we employ a TE-COD to generate the TE-DSTBC, the difference matrix $\textbf{X}_{1}-\textbf{X}_{2}$ gets a $0$ at the position where there is $\alpha$ in the design and hence the matrix $(\textbf{X}_{1} - \textbf{X}_{2})^{H} (\textbf{X}_{1} - \textbf{X}_{2})$ will be a diagonal one with full rank. Thus, TE-DSTBC generated from TE-CODs have full diversity property for arbitrary signal sets.
\subsection{Rate of Distributed TE-CODs in Symbols per Channel Use}
\label{sec3_subsec4}
In our proposed scheme, the total number of channel uses involving both the first phase and the second phase is $\lceil \frac{T-k}{2}\rceil + k + T$. Therefore the rate of our scheme in symbols per channel use is $\frac{k}{\lceil \frac{T-k}{2}\rceil + k + T}$ for both the square and the non-square TE-CODs. The rate for square TE-CODs with $2^a$ number of relays is $
 \frac{a + 1}{a + 1 + \lceil \frac{2^{a} - a - 1}{2} \rceil + 2^{a}}$
complex symbols per channel use and the rate for non-square TE-CODs, from maximal rate CODs, for the cases of $2m$ or $2m-1$ relays is easily calculated to be
$ \frac{\frac{T(m+1)}{2m}}{\frac{T(m+1)}{2m} + T + \lceil T(\frac{1}{4} - \frac{1}{4m})\rceil}
$
complex symbols per channel use.

In \cite{SCR}, SSD-DSTBCs have been constructed with rate $\frac{1}{2}$ in complex symbols per channel use in the second phase. If the number of channel uses in the first phase is also considered, then the rate of such codes will be $\frac{1}{3}$ (which doesn't include the number of channel uses needed for training in the first phase to calculate the rate). In Table \ref{rate_table_designs_1} (shown at the top of the next page), we list the rates (including the channel uses in both the phases) of TE-DSTBCs from non-square TE-CODs (which includes the number of channel uses for training in the first phase to calculate the rate) for different values of $K$. When compared with the codes in \cite{SCR}, (which doesn't include the channel uses needed for training in the first phase) it is clear that for networks with $K < 10$, TE-DSTBCs from non-square TE-CODs provide higher rate than those of the codes in \cite{SCR}. In \cite{SCR}, if the number of channel uses spent on transmitting the training symbols (from the source to the relays) is also included in calculating the rate, then the rate of such DSTBCs will be lesser than $\frac{1}{3}$ and hence non-square TE-CODs provide higher rate gains than those listed in Table \ref{rate_table_designs_1}.   
\begin{table*}
\caption{Overall Rates (First and Second Phase) of TE-DSTBC from Non-square TE-CODs}
\begin{center}
\begin{tabular}{|c|c|c|c|c|c|c|c|c|c|c|c|c|c|c|c|}
\hline  & $K = 4$ & $K = 5$ & $K = 6$ & $K = 7$ & $K = 8$ & $K = 9$ & $K = 10$ \\
\hline SSDs from non-square TE-CODs & $\frac{3}{8}> \frac{1}{3}$ & $\frac{5}{14}> \frac{1}{3}$ & $\frac{4}{11}> \frac{1}{3}$ & $\frac{35}{102}> \frac{1}{3}$ & $\frac{70}{203} > \frac{1}{3}$ & $\frac{126}{378}=\frac{1}{3}$ & $\frac{251}{756} < \frac{1}{3}$ \\
\hline 
\end{tabular} 
\end{center}
\label{rate_table_designs_1}
\end{table*}
\subsection{Channel estimation at the destination}
\label{sec3_subsec5}
For TE-DSTBCs, we assume one of the following two methods of channel estimation at the destination. 
\begin{itemize}
\item Since the destination receives a linear combination of the information symbols and the training symbols, it can possibly estimate all the channel gains using the symbols received during the $T$ channel uses in the second phase. For a background on channel estimation with superposition pilot sequence, we refer the readers to \cite{HoT}. Note that for the DSTBCs proposed in \cite{SCR}, separate training symbols are needed (from the relays to the destination) for the destination to estimate the channels. As a result, the proposed scheme provides further advantage in the overall rate (when the number of channel uses in sending training symbols from the relays to the destination is also included in the calculation of the rate) compared the schemes in \cite{SCR}. 
\item For TE-DSTBCs, additional training symbols can be transmitted from the relays to the destination for channel estimation (this is apart from the training symbols transmitted along with the information symbols). Since, additional training symbols are also needed for the codes proposed in \cite{SCR}, the existing rate advantage of our scheme over the scheme in \cite{SCR} still holds.
\end{itemize}
\section{Simulations Results}
\label{sec4}
In this section, we provide the performance comparison (in terms of the bit error rate) between the DSTBC from TE-CODs (given in Example \ref{example_1}) and the DSTBC proposed in \cite{SCR} 
 for $K = 4$. Note that both the codes have single-real-symbol ML decodable property for QAM signal sets. Throughout this section, the designs used in \cite{SCR} are referred as "CODs from RODs". For $K = 4$, the rates (in complex symbols per channel use in the second phase) of the DSTBCs from TE-COD and "COD from ROD" are respectively $\frac{3}{4}$ and $\frac{1}{2}.$ For the two codes, both the number of channel uses and the energy consumption in the first phase are the same. However, the number of channel uses and the energy consumption in the second phase are different for the two codes. Hence, for a fair comparison, we make the bits per channel use (bpcu) in the second phase same for both the codes, in particular, we make it equal to $1.5$ bpcu for the simulation purpose. To achieve the common rate of $1.5$ bpcu in the second phase, the TE-COD and the "COD from ROD" respectively employs 4-QAM signal set $\{-1+{\bf i}, 1+{\bf i},-1-{\bf i}, 1-{\bf i}\}$ and $8$-QAM signal set $\{-3+{\bf i}, -1+{\bf i}, 1+{\bf i},   3+{\bf i}, -3-{\bf i}, -1-{\bf i}, 1-{\bf i}, 3+{\bf i}\}$ to construct the DSTBCs. Note that the $8$-QAM signal set is not energy efficient; a more energy efficient $8$-point QAM is $\{ -1+3{\bf i}, 3+3{\bf i}, -3+{\bf i},   1+{\bf i}, -1-{\bf i}, 3-{\bf i}, -3-3{\bf i}, 1-3{\bf i}\}.$  However, with the use of the more energy efficient $8$-point QAM, real symbol ML decodable property will be lost (the ML decoder in such a case will be single-complex symbol decodable). Hence, we use the $8$-QAM constellation in our simulations. The BER performance of both codes are plotted against energy used per bit  in Fig. \ref{plot} which shows that TE-COD performs better than "COD from ROD" by 0.5 db.\\
\begin{figure}[h]
\centering
\includegraphics[width=3.5in]{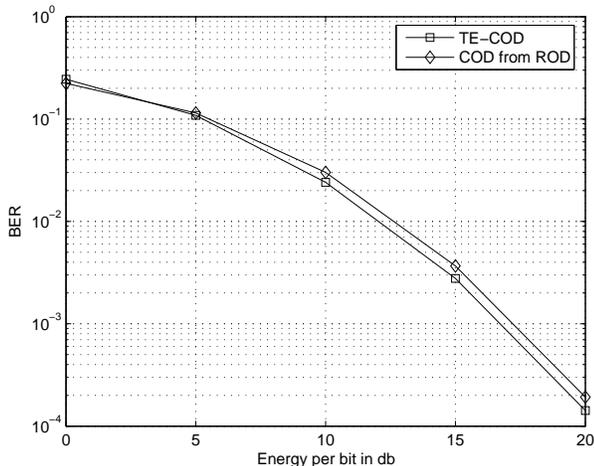}
\vspace*{-1.0cm}
\caption{BER comparison between the DSTBC from TE-COD and the DSTBC from "CODs from RODs" for $K = 4$ with $1.5$ bpcu}
\label{plot}
\vspace*{-0.5cm}
\end{figure}
\section{Discussion and Conclusions}
\label{sec5}
In this paper, through a training based distributed space-time coding technique, we have shown to construct the variants of the well known class of CODs in two-hop relay networks using amplify and forward protocol. The inclusion of training symbols in to the structure of the code has been shown to provide high rate as well as the SSD property for the constructed codes. This idea can be extended to construct DSTBCs from other SSDs like CIODs and CUWDs existing for point to point co-located MIMO channels to two-hop wireless networks \cite{HRH}.

\end{document}